\def\BibTeX{{\rm B\kern-.05em{\sc i\kern-.025em b}\kern-.08em
    T\kern-.1667em\lower.7ex\hbox{E}\kern-.125emX}}
\newcommand{\ud}[1]{\, \mathrm{d}#1}
\newcommand{\Z}{\mathbb{Z}}
\newcommand{\pauli}{\mathbb{S}}
\newcommand{\Cov}{\mathbb{C}\mathrm{ov}}
\newcommand{\bs}[1]{\boldsymbol{#1}}
\newcommand{\bK}{\bs{K}}
\newcommand{\bk}{\bs{k}}
\newcommand{\btheta}{\bs{\theta}}
\def\hmath$#1${\texorpdfstring{{\rmfamily\textit{#1}}}{#1}}
\DeclareMathOperator{\tr}{tr}
\DeclareMathOperator{\sspan}{span}
\newtheorem{theorem}{Theorem}
\newtheorem{proposition}{Proposition}
\title{Probabilistic Representation of Commutative Quantum Circuit Models}
\author{Richard Yu$^1$ \and Jorge Ramirez$^2$ \and Elaine Wong$^3$}
\date{%
    $^1$Georgia Tech, yohan.richard.yu@gmail.com \\[1ex]%
    $^2$Oak Ridge National Laboratory, ramirezosojm@ornl.gov \\[1.1ex]%
    $^3$Oak Ridge National Laboratory, wongey@ornl.gov
}
\begin{document}

\title{Probabilistic Representation of Commutative Quantum Circuit Models\\
\thanks{This manuscript has been authored by UT-Battelle, LLC under Contract No. DE-AC05-00OR22725 with the U.S. Department of Energy. The United States Government retains and the publisher, by accepting the article for publication, acknowledges that the United States Government retains a non-exclusive, paid-up, irrevocable, world-wide license to publish or reproduce the published form of this manuscript, or allow others to do so, for United States Government purposes. The Department of Energy will provide public access to these results of federally sponsored research in accordance with the DOE Public Access Plan (http://energy.gov/downloads/doe-public-access-plan).}
}

\author{\IEEEauthorblockN{Richard Yu}
\IEEEauthorblockA{\textit{School of Mathematics} \\
\textit{Georgia Institute of Technology}\\
Atlanta, GA 30332\\
ryu83@gatech.edu}
\and
\IEEEauthorblockN{Jorge Ramirez Osorio}
\IEEEauthorblockA{\textit{Computer Science and Mathematics Division}
 \\
\textit{Oak Ridge National Laboratory}\\
Oak Ridge, TN 37830\\
ramirezosojm@ornl.gov}
\and
\IEEEauthorblockN{Elaine Wong}
\IEEEauthorblockA{\textit{Computer Science and Mathematics Division} \\
\textit{Oak Ridge National Laboratory}\\
Oak Ridge, TN 37830\\
wongey@ornl.gov}
}

\maketitle

\begin{abstract}
In commuting parametric quantum circuits, the Fourier series of the pairwise fidelity can be expressed as the characteristic function of random variables. Furthermore, expressiveness can be cast as the recurrence probability of a random walk on a lattice. This construction had previously been applied to the group composed only of Pauli-$Z$ rotations. In this paper, we generalize this probabilistic strategy to any commuting set of Pauli operators. To this end, we can leverage an algorithm that uses the tableau representation of Pauli strings to yield a unitary from the Clifford group that, under conjugation, simultaneously diagonalizes our commuting set of Pauli rotations. Furthermore, we fully characterize the underlying distribution of the random walk using stabilizer states and their basis state representations. This would allow us to tractably compute the lattice volume and variance matrix used to express the frame potential. Together, this demonstrates a scalable strategy to calculate the expressiveness of parametric quantum models.
\end{abstract}

\begin{IEEEkeywords}
quantum computing, quantum circuit model, commutative circuit, probabilistic representation, quantum machine learning
\end{IEEEkeywords}

\section{Introduction}

\subsection{Motivation}

Designing quantum circuits that can produce a wide range of candidate solutions for a particular problem class is a persistent challenge in the development of Parametric Quantum Circuits (PQCs) for quantum machine learning.
The range of quantum states or unitaries a variational circuit can produce is known as its \emph{expressiveness}~\cite{Du2020}, \emph{expressibility}~\cite{Sim2019, Holmes2022}, or capacity~\cite{Haug2021}. 
The greater the expressiveness, the more likely the circuit is able to produce a good solution to a given problem which is useful in identifying better choices of architecture.
Expressiveness is related to computational complexity~\cite{Morales2020} as well as trainability~\cite{Holmes2022}, and has been quantified in various ways, including the rank of the Jacobian matrix~\cite{Haghshenas2022} or quantum Fisher information matrix~\cite{Larocca2023}, distance to a quantum $t$-design~\cite{Sim2019, Holmes2022}, or closeness of a pairwise fidelity distribution to an ideal distribution~\cite[Section 3.1.1]{Sim2019}.
Following the work of \cite{ramirez2024expressiveness}, we conceptualize the expressiveness of a quantum circuit in terms of the \emph{frame potential} of the circuit. This approach has previously been used in \cite{Sim2019}, and is based on the observation that frame potentials can be used to estimate the non-uniformity of the set of states generated by a PQC, namely, they can be understood as the ability of a PQC to traverse the Bloch sphere. Efficient computation of frame potentials, thus provides a means to assess the complexity of a PQC architecture. 

However, the frame potential of a given PQC is known to be intractable to compute due to the exponential scaling of qubit systems. We continue to develop our understanding of this intractability by expanding on the work from~\cite{ramirez2024expressiveness}, and present a scalable and theoretically rigorous  strategy for computing the asymptotic behavior of the frame potential of any PQC defined by a commuting set of Pauli operators. 

\subsection{Relevant Definitions and Main Goal}

Throughout this paper, we will use bold lower-case symbols to denote vectors, as in $\bs{u}, \bs{k}$ and $\bs{\theta}$, with its individual entries denoted as $\bs{u} = (u^{(1)}, u^{(2)}, \dots)$ or $\bs{\theta} = (\theta_1,\theta_2,\dots)$ depending if the vector represents a row or a column, respectively. Matrices are bold capitalized as in $\bs{K}$ or $\bs{A}$. Random variables are always capitalized as in $K$. The index $j$ is only used to count over gates $\{1,\dots,N\}$, and the index $x$ is always in $\{0,\dots,2^n-1\}$.

We consider PQCs defined by unitary operators of the form
\begin{equation}\label{eq:defU}
    U(\btheta) := \prod_{j=1}^N e^{i \theta_j H_j}, \quad \theta_j \in [-\pi, \pi],
\end{equation}
where $n$ is the number of qubits, $N$ the number of gates, $n \leq N \leq 2^n$, and $\{H_j\}_{j=1}^N$ is a commuting set of Pauli strings. The frame potential of $U$ is
\begin{equation}\label{def:framePot}
    \mathcal{F}_U(t) := \frac{1}{(2 \pi)^{2N}} \int_{[-\pi,\pi]^{2N}} F_U(\btheta,\btheta')^{t} \ud \btheta \ud \btheta', \quad t \geq 0,
\end{equation}
where $F_U$ is the fidelity of $U$,
\begin{equation}\label{eq:defFU}
    F_U(\btheta,\btheta') := \left| \langle 0 |^{\otimes n} U(\btheta)^\dagger  U(\btheta') |0\rangle^{\otimes n} \right|^{2}.
\end{equation}

Here, the fidelity $F_U \in [0,1]$ quantifies the similarity between states produced by circuits with two different parameter settings. This means that if we fix $t>0$, the frame potential can be understood as quantifying the average similarity between all pairs of states the circuit can produce.  A relatively small frame potential indicates that most producible states are dissimilar to each other, which intuitively corresponds to large expressiveness or complexity. Since $\{H_j\}_{j=1}^N$ is a commuting set, we can write fidelity as
\begin{equation}
F_U (\btheta, \btheta') = |\langle0|^{\otimes n} U(\btheta - \btheta') |0\rangle^{\otimes n}|^2 = |f_U (\btheta - \btheta')|^2,
\end{equation} 
where 
\begin{equation} \label{eq:deffU}
    f_U (\btheta) = \langle 0 |^{\otimes n} U (\btheta) |0 \rangle^{\otimes n}.
\end{equation}

The \textbf{goal of this paper} is to derive a probabilistic representation of $f_U$ and characterize the asymptotic behavior of $\mathcal{F}_U (t)$ as $t \to \infty$ when $\{H_j\}_{j = 1}^N$ is \textbf{any} commuting set of Pauli operators. 

Our strategy is to generalize the probabilistic method developed in~\cite{ramirez2024expressiveness} which applies exclusively to the case where each string $H_j$ is composed only of identity and Pauli-$Z$ operators. Namely, the list $\{H_j\}_{j = 1}^N$ was chosen from the Cartan subalgebra  $S = \{I, Z\}^{\otimes n} - \{I^{\otimes n}\}$ of the $n$-qubit Pauli group $\pauli_n$. This group can be partitioned into a total of  $2^n + 1$ such Cartan subalgebras. In this paper, we derive the probabilistic representation of $f_U$ to  PQCs with Hamiltonians chosen from any of the remaining subalgebras, thereby generalizing the work to all of $\pauli_n$. Notably, using the theory of stabilizer sets is what adds novelty to and delineates our work from the probabilistic foundations that were set in~\cite{ramirez2024expressiveness}.

From now on, we restrict to operators $U$ of the form \cref{eq:defU} where $H_1, \dots, H_N$ are commuting Pauli operators in $\pauli_N$. These have integer spectra, and since they are simultaneously diagonalizable~\cite{vandenBerg2020circuitoptimization}, there exists a unitary $W$ in the Clifford group such that $\Lambda_j = W H_j W^\dagger \in S$ is a diagonal matrix for all $1 \leq j \leq N$. Therefore, 
\[f_U (\btheta) = \langle0|^{\otimes n} W^\dagger \operatorname{exp}(i (\theta_1 \Lambda_1 + \dots + \theta_N \Lambda_N)) W |0\rangle^{\otimes n}.\] 
Using this reduction, we can assume without loss of generality that the operator $U$ is the matrix exponential of the sum of diagonal matrices and we can write,
\begin{align}
    U(\btheta) &= \exp(i \theta_1 \Lambda_1 + \dots + i \theta_N \Lambda_N), \label{eq:defULambdas}\\
    f_U(\btheta) &= \langle \psi_0| U(\btheta) |\psi_0\rangle \text{ where } |\psi_0\rangle = W |0\rangle^{\otimes n}.\label{eq:fUW}
\end{align}
Each of the diagonal operators $\Lambda_j$ can be represented as a Pauli string of only $Z$ and identity operators, so parts of the formulation in \cite{ramirez2024expressiveness} can be directly applied. The generalization to all of $\pauli_n$, which is the main contribution of this paper, is in characterizing the effect that $W$ has on the probabilistic representation of the frame potential.

\subsection{Organization of Paper}

\cref{sec:ProbRep} formulates the quantum expectation in \cref{eq:fUW} as a characteristic function of a random variable on the spectrum of $U$. Characterizing the distribution of this random variable is algorithmically fleshed out in \cref{sec:K}, using properties of stabilizer states. An example application is presented in \cref{sec:example}. In \cref{sec:noncommutative}, we discuss two potential approaches to rigorously generalize this work to all sets of Pauli operators.

\section{The Probabilistic Representation}\label{sec:ProbRep}
Following~\cite{ramirez2024expressiveness}, we can interpret the quantum expectation $f_U(\btheta)$ in \cref{eq:fUW} from a probabilistic perspective. First note that because it is a positive-definite function, by Bochner's theorem, $f_U$ is the characteristic function of some random variable $K$. For the construction of $K$, we index the rows and columns of each $\Lambda_j$ by $x=0,\dots,2^n-1$ and define
\begin{equation}
    \bk_x := (k_x^{(1)}, \dots, ..., k_x^{(N)}), \quad k_x^{(j)} = (\Lambda_j)_{x,x}.
\end{equation}
Then, we can extract the diagonal as \[\operatorname{diag}(\theta_1 \Lambda_1 + \dots + \theta_N \Lambda_N) = \left(\btheta \cdot \bk_0\ \dots\ \btheta \cdot \bk_{2^n - 1}\right)^\intercal.\] Let $\bs{u}_x \in \mathbb{Z}_2^n$ denote the integer $x$ written in its $n$-bit representation as a vector. Then, we find that 
\begin{align}
f_U(\btheta) &= \tr\left(|\psi_0\rangle \langle\psi_0| e^{i (\theta_1 \Lambda_1 + \dots + \theta_N \Lambda_N)}\right) \nonumber \\ &= \sum_{x=0}^{2^n - 1} |\langle\psi_0|\bs{u}_x\rangle|^2 e^{i \btheta \cdot \bk_{x}} = \mathbb{E}[e^{i \btheta \cdot \bk_X}], \nonumber
\end{align}
where $X$ is a random variable taking values in $\{0, ..., 2^n - 1\}$ with probability mass function
\begin{equation}\label{eq:pmfX}
    \mathbb{P}(X = x) = |\langle \psi_0|\bs{u}_x\rangle|^2.
\end{equation}
Note that $\sum_{x=0}^{2^n - 1}  |\langle \psi_0 | \bs{u}_x \rangle|^2 = 1$ because $|\psi_0\rangle$ is a normalized quantum state. We will think of $\bs{k}_X$ as a random row from
\begin{equation}\label{def:bK}
    \bK := \begin{pmatrix}
         -& \bk_0 & -   \\
        & \vdots &\\
         -& \bk_{2^n - 1}& - 
    \end{pmatrix} \in  \Z^{2^n \times N}.
\end{equation}
If we define the random variable $K = \bk_X$, then the probability distribution of $K$ can be computed by tallying the amplitudes of the indices where a given $\bs{k}$ is repeated in $\bs{K}$,  
\begin{equation}\label{eq_pmfK}
\mathbb{P}(K = \bk) = \sum_{x:\bk_x = \bk} |\langle \psi_0 | \bs{u}_x \rangle|^2.
\end{equation}

Theorem 1 in~\cite{ramirez2024expressiveness} shows that the frame potential in \cref{def:framePot} can be written as $\mathcal{F}_U(t) = \mathbb{P}(W_t = 0)$ where $\{W_t\}_{t=0}^{\infty}$ is a random walk on $\Z^n$ with increments given by differences of independent random variables distributed as \cref{eq_pmfK}. The central limit theorem gives the following approximation to $\mathcal{F}$,
\begin{equation}\label{eq:tildeFU}
\Tilde{\mathcal{F}}_U (t) = \frac{V_U}{\sqrt{(4 \pi t)^N \operatorname{det}(\Cov(K))}},
\end{equation}
where $V_U$ is the volume of the lattice in $\Z^n$ generated by the random walk $W_t$. Specifically, $|\mathcal{F}_U(t) - \Tilde{\mathcal{F}}_U (t)| = O(t^{-(1+N/2)})$ as $t \to \infty$ (see~\cite[Theorem 2]{ramirez2024expressiveness}). In particular, the range and distribution of $K$ depends on the algebraic structure of $\Lambda_j$, as its mean and covariance matrix in \cref{eq:tildeFU} are
\begin{align}
    \mathbb{E} K^{(j)} &= \sum_{x=0}^{2^n - 1} k_x^{(j)} |\langle\psi_0|\bs{u}_x\rangle|^2 = \langle \psi_0 | \Lambda_j | \psi_0 \rangle, \\
    \Cov(K)_{i,j} &= \langle \psi_0| \Lambda_i \Lambda_j |\psi_0 \rangle - \langle \psi_0 | \Lambda_i | \psi_0 \rangle \langle \psi_0 | \Lambda_j | \psi_0 \rangle.
\end{align}

In the case of Pauli strings containing only the $Z$ and identity operator, it is proven in~\cite{ramirez2024expressiveness} that the entries of $K$ are independent, and identically distributed Bernoulli random variables. We now derive the corresponding result for any Pauli string. 

\subsection{Expression for the Distribution of \texorpdfstring{$K$}{K}}

We encode $U$ in a binary matrix $\bs{A} \in \{0,1\}^{N \times n}$ by making its $(j,m)$-th entry $A_{j}^{(m)}$ equal to 1 if the $m$-th operator of the Pauli string representation of $\Lambda_j$ is a $Z$, and 0 otherwise. We will use $\bs{A}$ to determine the rows of $K$ and its distribution.

\begin{theorem}\label{thm:pmfK1}
Let $\bs{A}$ be as above. The random variable $K$ takes values on a subset of $\{-1,1\}^N$. If we write a general vector in such set as $(-1)^{\boldsymbol{b}}$ for some $\boldsymbol{b} \in \mathbb{Z}_2^N$, then,
\begin{equation}\label{eq_kDistribution} 
\mathbb{P}(K = (-1)^{\bs{b}}) = \sum_{x: \bs{A} \bs{u}_x=\bs{b}} |\langle \psi_0 | \bs{u}_x \rangle|^2. 
\end{equation}
\end{theorem}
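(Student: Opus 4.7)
The plan is to derive \cref{eq_kDistribution} by computing the diagonal entries $k_x^{(j)} = (\Lambda_j)_{x,x}$ directly from the tensor-product structure of $\Lambda_j$, identifying the row $\bs{k}_x$ with $(-1)^{\bs{A}\bs{u}_x}$ componentwise, and then substituting into the generic pmf formula \cref{eq_pmfK}.

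First, I would write each $\Lambda_j$ as a Pauli string $\bigotimes_{m=1}^n P_j^{(m)}$ with $P_j^{(m)} \in \{I, Z\}$, where by construction $P_j^{(m)} = Z$ exactly when $A_j^{(m)} = 1$. The diagonal of a tensor product factorizes as a product of diagonals, and since $I_{b,b} = 1 = (-1)^{0\cdot b}$ and $Z_{b,b} = (-1)^b = (-1)^{1\cdot b}$ for $b \in \{0,1\}$, I obtain
\[
k_x^{(j)} \;=\; \prod_{m=1}^n \bigl(P_j^{(m)}\bigr)_{u_x^{(m)}, u_x^{(m)}} \;=\; \prod_{m=1}^n (-1)^{A_j^{(m)} u_x^{(m)}} \;=\; (-1)^{(\bs{A}\bs{u}_x)_j},
\]
where the exponent in the last expression may be read either as an integer or modulo $2$ since the base is $-1$. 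Stacking over $j$ gives $\bs{k}_x = (-1)^{\bs{A}\bs{u}_x}$ entrywise, so $\bs{k}_x \in \{-1, 1\}^N$ for every $x$, and hence $K = \bs{k}_X \in \{-1,1\}^N$ almost surely, establishing the first claim.

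Second, I would translate the level sets of $K$ into level sets of $\bs{A}\bs{u}_x$ modulo $2$. For any $\bs{b} \in \mathbb{Z}_2^N$, the identity $\bs{k}_x = (-1)^{\bs{b}}$ is equivalent to $\bs{A}\bs{u}_x \equiv \bs{b} \pmod 2$. Plugging this equivalence into the general formula \cref{eq_pmfK} immediately yields
\[
\mathbb{P}(K = (-1)^{\bs{b}}) \;=\; \sum_{x:\,\bs{k}_x = (-1)^{\bs{b}}} |\langle \psi_0 | \bs{u}_x\rangle|^2 \;=\; \sum_{x:\,\bs{A}\bs{u}_x = \bs{b}} |\langle \psi_0 | \bs{u}_x\rangle|^2,
\]
which is exactly \cref{eq_kDistribution}.

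There is no serious obstacle here; the argument is a bookkeeping exercise relating binary arithmetic in $\mathbb{Z}_2^N$ to signs in $\{-1,1\}^N$. The one subtlety worth flagging is the implicit mod-$2$ identification in the indexing condition $\bs{A}\bs{u}_x = \bs{b}$: it must be read over $\mathbb{Z}_2$ so that distinct $\bs{b}$'s index disjoint subsets of $\{0,\dots,2^n-1\}$ and the right-hand side defines a bona fide pmf summing to $1$, which follows from $\sum_x |\langle \psi_0 | \bs{u}_x\rangle|^2 = 1$. The genuinely nontrivial content, namely evaluating the amplitudes $|\langle \psi_0 | \bs{u}_x\rangle|^2$ when $|\psi_0\rangle = W|0\rangle^{\otimes n}$ is a stabilizer state, is deferred to \cref{sec:K}.
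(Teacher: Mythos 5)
Your proposal is correct and follows essentially the same route as the paper's proof: both compute the diagonal of each $\Lambda_j$ from its tensor-product structure of $I$ and $Z$ factors, identify $\bs{k}_x = (-1)^{\bs{A}\bs{u}_x}$ entrywise, and substitute into \cref{eq_pmfK}. Your explicit remark about reading $\bs{A}\bs{u}_x = \bs{b}$ over $\mathbb{Z}_2$ so the level sets partition the index set is a minor clarification of the same argument, not a different approach.
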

\begin{proof}
Because each $\Lambda_j$ is a Pauli string of $I$ and $Z$ gates, the diagonal of $\Lambda_j$ can be expressed as  
\[
\operatorname{diag}(\Lambda_j) = (-1)^{\begin{psmallmatrix} 0 \\ A_{j}^{(1)}\end{psmallmatrix}} \otimes (-1)^{\begin{psmallmatrix}0 \\ A_{j}^{(2)} \end{psmallmatrix}} \otimes \dots \otimes (-1)^{\begin{psmallmatrix} 0 \\ A_{j}^{(n)} \end{psmallmatrix}}.
\]
We find $K_x^{(j)} = (-1)^p$ with 
$
p = \bigoplus_{m=1}^n A_j^{(m)} u_x^{(m)}
$
where $\oplus$ denotes addition mod 2. This means that any row $\bs{k}_x$ of $K$ can be expressed as $\bs{k}_x = (-1)^{\bs{A} \bs{u}_x}$. Namely $K = \bs{k}_X = (-1)^{\bs{b}}$ if and only if $\bs{A} \bs{u}_X = \bs{b}$. \cref{eq_kDistribution} follows from \cref{eq_pmfK}. \end{proof}

\section{Characterization of the Distribution of \texorpdfstring{$K$}{K}}\label{sec:K}

We now turn our attention to characterizing the possible values of $|\langle \psi_0 | \bs{u}_x \rangle|^2$ needed to compute the distribution of $K$ in \cref{eq_kDistribution}. These are a property of the base state, which recalling \cref{eq:fUW}, is given by $|\psi_0\rangle = W |0\rangle^{\otimes n}$.

\subsection{Representations of Stabilizer States and Their Amplitudes}

Since $W$ is in the Clifford group, it can be represented as some combination of Clifford gates. This means that $|\psi_0\rangle = W |0\rangle^{\otimes n}$ is a stabilizer state, namely, a quantum state that is obtainable from $|0\rangle^{\otimes n}$ by applying only just Controlled-NOT gates, Hadamard gates, and Phase gates. See~\cite{Aaronson2004ImprovedSO}. From~\cite[Theorem 9]{StabilizerStateProperties}, for any $n$-qubit stabilizer state $|\psi_0\rangle$, there exist vectors $\bs{c}, \bs{t} \in \mathbb{Z}_2^n$, a symmetric binary matrix $\bs{Q} \in \Z_2^{n\times n}$ and $\bs{R} \in \Z_2^{n\times r}$ of rank $r$ for some $r \leq n$, such that 
\begin{equation}\label{eq_stabilizerStateFormula} 
\begin{split}
|\psi_0\rangle &= \frac{1}{2^{r/2}} \sum_{\bs{z} \in \mathbb{Z}_2^r} i^{2\bs{c}^\intercal f(\bs{z}) + f(\bs{z})^\intercal \bs{Q} f(\bs{z})} |f(\bs{z}) \rangle,\\
f(\bs{z}) &= \bs{Rz} + \bs{t}, \quad \bs{z} \in \Z_2^r.
\end{split}
\end{equation}
See~\cite{DehaeneStabilizerStates} for an efficient method to compute for $\bs{R}$ and $\bs{t}$. Let  
$x \in 0, \dots, 2^n-1$. Then, substituting \eqref{eq_stabilizerStateFormula} gives
\begin{align}
|\langle \psi_0 | \bs{u}_x \rangle|^2  
&= \left| \left\langle \bs{u}_x \middle| \frac{1}{2^{r/2}}  \sum_{\bs{z} \in \mathbb{Z}_2^r} i^{2\bs{c}^\intercal f(\bs{z}) + f(\bs{z})^\intercal \bs{Q} f(\bs{z})} |f(\bs{z}) \right\rangle \right|^2 \nonumber\\
&= \begin{cases}
    \frac{1}{2^r} & \bs{u}_x \in \operatorname{range}(f) \\
    0 & \operatorname{otherwise}
\end{cases}. \label{eq:amplitude}
\end{align}

It follows by \cref{eq:pmfX} that $\mathbb{P}(X = x)$ is either zero or $2^{-r}$. Namely, $X$ is uniformly distributed on some subset of $\{0,\dots,2^n-1\}$ containing $2^r$ elements. We thus arrive at a further specification of the distribution of $K$.

\begin{proposition}\label{prop:pmfK2}
    Let $\bs{A}$ be as in \cref{thm:pmfK1} and $\bs{R}$, $f$ as in \cref{eq_stabilizerStateFormula}. Then for $\bs{b} \in \Z_2^N$,
        \begin{equation}\label{eq_pauliDistributionK}
        \mathbb{P}(K = (-1)^{\bs{b}}) = 
        \begin{cases} 
            2^{-\operatorname{rank}(\bs{AR})} ,& \bs{b} \in \operatorname{range}(\bs{A} f), \\ 
            0, & \operatorname{otherwise}. 
        \end{cases}
        \end{equation}
\end{proposition}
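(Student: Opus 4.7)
The plan is to combine Theorem~\ref{thm:pmfK1} with the amplitude formula \cref{eq:amplitude} and then reduce the computation of $\mathbb{P}(K=(-1)^{\bs{b}})$ to counting solutions of an affine linear system over $\Z_2$. Specifically, Theorem~\ref{thm:pmfK1} gives
\[
\mathbb{P}(K = (-1)^{\bs{b}}) = \sum_{x:\bs{A}\bs{u}_x = \bs{b}} |\langle \psi_0 | \bs{u}_x\rangle|^2,
\]
and by \cref{eq:amplitude} every summand is either $0$ or $2^{-r}$. So the only task is to count the number of indices $x$ for which both $\bs{u}_x \in \rrange(f)$ and $\bs{A}\bs{u}_x = \bs{b}$ hold simultaneously, and multiply that count by $2^{-r}$.

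Next I would parametrize $\bs{u}_x \in \rrange(f)$ by writing $\bs{u}_x = f(\bs{z}) = \bs{R}\bs{z} + \bs{t}$ for some $\bs{z} \in \Z_2^r$. Because $\bs{R}$ has rank $r$ (full column rank over $\Z_2$), the map $f$ is injective, so distinct choices of $\bs{z}$ correspond to distinct values of $\bs{u}_x$ and hence to distinct indices $x$. Under this substitution the constraint $\bs{A}\bs{u}_x = \bs{b}$ becomes the affine system
\[
\bs{A}\bs{R}\,\bs{z} \;=\; \bs{b} \oplus \bs{A}\bs{t} \qquad \text{over } \Z_2,
\]
which is solvable if and only if $\bs{b} \oplus \bs{A}\bs{t} \in \rrange(\bs{A}\bs{R})$, equivalently $\bs{b} \in \rrange(\bs{A}f)$, recovering the case split in \cref{eq_pauliDistributionK}.

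When the system is consistent, its solution set is an affine subspace of $\Z_2^r$ parallel to $\ker(\bs{A}\bs{R})$, so it contains exactly $2^{r - \rrange(\bs{A}\bs{R})}$ elements (using the rank--nullity theorem over $\Z_2$). Multiplying by the amplitude $2^{-r}$ yields
\[
\mathbb{P}(K = (-1)^{\bs{b}}) = 2^{r - \operatorname{rank}(\bs{A}\bs{R})} \cdot 2^{-r} = 2^{-\operatorname{rank}(\bs{A}\bs{R})},
\]
which is exactly the claim. There is essentially no obstacle here beyond bookkeeping; the only subtle point to flag is the injectivity of $f$, which must be invoked to be sure the $2^{r-\operatorname{rank}(\bs{A}\bs{R})}$ values of $\bs{z}$ correspond to that many distinct $x$ and not fewer. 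Everything else is a routine application of linear algebra over $\Z_2$ together with the already-established formulas \cref{eq_kDistribution} and \cref{eq:amplitude}.
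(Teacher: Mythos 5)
Your proposal is correct and follows essentially the same route as the paper's proof: combine \cref{eq_kDistribution} with \cref{eq:amplitude}, reduce to counting $\bs{z}\in\Z_2^r$ with $\bs{A}f(\bs{z})=\bs{b}$, and apply rank--nullity over $\Z_2$ to get $2^{r-\operatorname{rank}(\bs{AR})}$ solutions in the consistent case. Your explicit remark on the injectivity of $f$ (from $\bs{R}$ having full column rank) is a detail the paper leaves implicit, but it does not change the argument.
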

\begin{proof}
    Combining \cref{eq_kDistribution,eq:amplitude} gives 
        \begin{align*}
        \mathbb{P}(K = (-1)^{\bs{b}}) &= \sum_{x : \bs{A} \bs{u}_x \nonumber = \bs{b}} |\langle \psi_0 | \bs{u}_x\rangle|^2 \nonumber \\ 
        &= \frac{1}{2^r}\#\{  \bs{z} \in \mathbb{Z}_2^r : \bs{A}f(\bs{z}) = \bs{b} \} \\
        &= \frac{1}{2^r} \begin{cases} 
            2^{\operatorname{dim}(\operatorname{null}(\bs{AR}))}, & \bs{b} \in \operatorname{range}(\bs{A} f), \\ 
            0, & \operatorname{otherwise}.
        \end{cases}
        \end{align*}
    The results follows by noting that since $\bs{A R} \in \mathbb{Z}_2^{N \times r}$ and it is assumed that $N \geq n \geq r$, then $\operatorname{dim} (\operatorname{null}(\bs{AR})) = r - \operatorname{rank}(\bs{A R})$. \end{proof}

Note that \cref{prop:pmfK2} generalizes Thm. 3 in~\cite{ramirez2024expressiveness}. There, the probability mass function of $K$ was simply
$\mathbb{P}(K = (-1)^{\bs{b}}) = 2^{-\operatorname{rank}(\bs{A})}$,
which corresponds to \cref{eq_pauliDistributionK} in the case where the state $|\psi_0\rangle$ is maximally mixed, namely $W = H^{\otimes n}$. This assumption sets $r = n$, and $\operatorname{rank}(\bs{A R}) = \operatorname{rank}(\bs{A})$.

In summary, according to \cref{thm:pmfK1} and \cref{prop:pmfK2}, the possible values of the random variable $K$ are $2^{\operatorname{rank}(\bs{AR})}$ distinct rows of the matrix $\bs{K}$ corresponding to $2^r$ indices $x \in \{0,\dots,2^n-1\}$. Each such row is thus repeated $2^{r - \operatorname{rank}(\bs{AR})}$ times in $\bs{K}$. The set of indices $x$ of these rows in $\bs{K}$ is in one-to-one correspondence with the set of vectors $\bs{u}_x$ that can written as $\bs{u}_x = f(\bs{z}) = \bs{R} \bs{z} + \bs{t}$ for some $\bs{z} \in \Z_2^r$,  which in turn is equivalent to the set of basis states $| \bs{u}_x \rangle$ such that $|\langle \psi_0 | \bs{u}_x \rangle|^2 > 0$. The \textit{support} of $K$ can thus be specified by the set of basis states
\begin{equation}\label{eq:defJK}
    J_K = \{|\bs{R} \bs{z} + \bs{t}\rangle : \bs{z} \in \mathbb{Z}_2^r\}.
\end{equation}
Thus, to complete the understanding of how $K$ is distributed, it is enough to study the construction of $J_K$. 

\subsection{Generating the Support}
Tableaus are employed by~\cite{Aaronson2004ImprovedSO} to represent stabilizer states, and there are measurement procedures on these tableaus that will allow us to identify $\bs{R}$ and $\bs{t}$ in \cref{eq:defJK}. 

Denote by $\operatorname{H}(a)$ the Pauli string of $n$ qubits with all positions set to $I$, except the $a$-th position is the Pauli-$Z$ matrix. Define $M_{q,0}$ and $M_{q, 1}$ to be the projection operator of the Pauli string $\operatorname{H}(q)$'s positive and negative eigenspaces respectively, namely 
\begin{equation}\label{eq:defM}
    M_{q, 0} = (I + \operatorname{H}(q))/2, \quad M_{q, 1} = (I - \operatorname{H}(q))/2.
\end{equation}
We will denote by $\tau(|\psi_{0}\rangle)$ the tableau representation of the stabilizer state $|\psi_{0}\rangle$. See \cite{Aaronson2004ImprovedSO}. A stabilizer state has the property that it can be uniquely represented by its Pauli stabilizer group which has $n$ generators, each being a Pauli string. These are encoded in the tableau $\tau(|\psi_0\rangle)$ in the following manner:
\begin{table}[!hbt]
\centering
\resizebox{\columnwidth}{!}{%
$\left(
\begin{tabular}[c]{ccc|ccc|c}%
$x_{11}$ & $\cdots$ & $x_{1n}$ & $z_{11}$ & $\cdots$ & $z_{1n}$ & $r_{1}$\\
$\vdots$ & $\ddots$ & $\vdots$ & $\vdots$ & $\ddots$ & $\vdots$ & $\vdots$\\
$x_{n1}$ & $\cdots$ & $x_{nn}$ & $z_{n1}$ & $\cdots$ & $z_{nn}$ & $r_{n}$\\\hline
$x_{\left(  n+1\right)  1}$ & $\cdots$ & $x_{\left(  n+1\right)  n}$ &
$z_{\left(  n+1\right)  1}$ & $\cdots$ & $z_{\left(  n+1\right)  n}$ &
$r_{n+1}$\\
$\vdots$ & $\ddots$ & $\vdots$ & $\vdots$ & $\ddots$ & $\vdots$ & $\vdots$\\
$x_{\left(  2n\right)  1}$ & $\cdots$ & $x_{\left(  2n\right)  n}$ &
$z_{\left(  2n\right)  1}$ & $\cdots$ & $z_{\left(  2n\right)  n}$ & $r_{2n}$
\end{tabular}\right),
$%
}
\end{table}

The bits $x_{(n+i)j}, z_{(n+i)j}$ in the bottom $n$ rows of $\tau(|\psi_0\rangle)$ represent the $j$-th Pauli operator of the $i$-th generator's string with $00 \rightarrow I$, $01 \rightarrow Z$, $10 \rightarrow X$, and $11 \rightarrow Y$. The top $n$ rows encode the strings of the "destabilizer" generators, which together with the bottom $n$ Pauli strings generate the entire Pauli group. Define $X(|\psi_{0}\rangle)$ and $Z(|\psi_{0}\rangle)$ to be the $X$ and $Z$ matrix portions of the tableau (left and middle blocks respectively), and $S(|\psi_{0}\rangle)$ the sign column on the right. Also, let $\overline{X}(|\psi_{0}\rangle)$ be the square matrix corresponding to be the last $n$ rows of  $X(|\psi_{0}\rangle)$.

From~\cite{Aaronson2004ImprovedSO}, the tableau formalism is sufficient for us to efficiently simulate any quantum circuit made from Clifford gates. Further, this simulation procedure will yield states that are reachable from measurements of stabilizer states, which correspond precisely to those states where $|\langle \psi_0 | \bs{u}_x \rangle|^2 \neq 0$. We then can prove the following theorem, which completes the characterization of the random variable $K$.

\begin{theorem}\label{thm:JK}
    Let $|\psi_0\rangle$ be the stabilizer state of $U$ as in \cref{eq:fUW}, and let $x_0 \in \{0,\dots,2^n-1\}$ be such that $|\langle \psi_0 |\bs{u}_{x_0} \rangle|^2 \neq 0$. Then the set $J_K$ in \cref{eq:defJK} can be obtained by making $\bs{t} = \bs{u}_{x_0}$ and $\bs{R} \in \Z_2^{n \times r}$ equal to a matrix whose column space spans the row space $\overline{X}(|\psi_0\rangle)$.  Moreover, $r =\text{rank}(\overline{X}(|\psi_0\rangle))$.
\end{theorem}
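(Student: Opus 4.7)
My plan is to identify the support of $|\psi_0\rangle$ in the computational basis with the affine subspace $\bs{u}_{x_0} + V \subseteq \Z_2^n$, where $V$ is the row space of $\overline{X}(|\psi_0\rangle)$. Matching this against the parametrization in \cref{eq:defJK} will force $\bs{t} = \bs{u}_{x_0}$, make the column space of $\bs{R}$ equal to $V$, and yield $r = \dim V = \operatorname{rank}(\overline{X}(|\psi_0\rangle))$, which is precisely the theorem's conclusion.

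For the inclusion $\bs{u}_{x_0} + V \subseteq \operatorname{supp}(|\psi_0\rangle)$, I would recall that any Pauli string $g$ with $X$-part $\bs{x}_g$ acts on a computational basis state as a shift by $\bs{x}_g$ up to a unit phase: $g|\bs{u}\rangle = \lambda_{g,\bs{u}}|\bs{u}\oplus\bs{x}_g\rangle$ with $|\lambda_{g,\bs{u}}|=1$, where the phase depends on the tableau sign bit and on $\bs{z}_g \cdot \bs{u}$. Expanding $|\psi_0\rangle = \sum_x \alpha_x |\bs{u}_x\rangle$ and applying $g|\psi_0\rangle = |\psi_0\rangle$ for each bottom-row generator of $\tau(|\psi_0\rangle)$, matching coefficients gives $|\alpha_{\bs{u}}| = |\alpha_{\bs{u}\oplus\bs{x}_g}|$. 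Iterating over products of generators, the support is closed under $\oplus$-addition with any element of $V$, and since $|\alpha_{\bs{u}_{x_0}}| > 0$, this covers all of $\bs{u}_{x_0} + V$.

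For the reverse inclusion, I would use the complementary subgroup of $Z$-only generators. The $n$ stabilizer generators encoded in the bottom half of $\tau(|\psi_0\rangle)$ are linearly independent in $\Z_2^{2n}$, so the projection onto $X$-parts has image $V$ and kernel of dimension $n - \operatorname{rank}(\overline{X}(|\psi_0\rangle))$. Any $g$ in this kernel acts diagonally as $g|\bs{u}\rangle = \pm(-1)^{\bs{z}_g\cdot\bs{u}}|\bs{u}\rangle$, so $g|\psi_0\rangle = |\psi_0\rangle$ forces every $\bs{u}$ in the support to satisfy an affine parity constraint $\bs{z}_g\cdot\bs{u}\equiv s_g \pmod 2$. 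These $n - \operatorname{rank}(\overline{X}(|\psi_0\rangle))$ independent constraints bound $|\operatorname{supp}(|\psi_0\rangle)| \leq 2^{\operatorname{rank}(\overline{X}(|\psi_0\rangle))} = |V|$, which together with the previous inclusion gives equality. The identification $r = \operatorname{rank}(\overline{X}(|\psi_0\rangle))$ then follows by matching cardinalities with \cref{eq:amplitude}, whose uniform amplitude $2^{-r/2}$ on the support forces $|\operatorname{supp}(|\psi_0\rangle)| = 2^r$.

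The main obstacle I anticipate is keeping the dimensional bookkeeping airtight across the two inclusions: the $X$-shift argument shows closure of the support under a subgroup, but extracting the exact cardinality requires the kernel count of the $X$-projection from the $n$ independent tableau generators, which in turn rests on the independence and commutation structure guaranteed by the stabilizer formalism. An alternative route, should the counting become delicate, is to appeal directly to the normal form in \cref{eq_stabilizerStateFormula}, show that any valid parametrization must satisfy $\operatorname{colspan}(\bs{R}) = V$ and $\bs{t} - \bs{u}_{x_0} \in V$ by matching supports, and read off the rank equality as a consequence.
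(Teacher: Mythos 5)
Your proposal is correct, but it proves the theorem by a genuinely different route than the paper. The paper's proof is operational: it runs the Aaronson--Gottesman measurement simulation, writing $|\bs{u}_{x_0}\rangle$ as a product of $r$ ``random-outcome'' projectors applied to $|\psi_0\rangle$, tracking the tableau row operations to show each such measurement drops $\operatorname{rank}(\overline{X})$ by exactly one (giving $r=\operatorname{rank}(\overline{X}(|\psi_0\rangle))$), and then observing that negating any single outcome shifts the measured basis state by the corresponding row $X(|\bs{u}\rangle)_{p_t-n}$, which generates the affine support. Your argument is instead purely group-theoretic: the $X$-parts of the stabilizer generators act as phase-decorated shifts, so $g|\psi_0\rangle=|\psi_0\rangle$ forces the support to be closed under translation by the row space $V$ of $\overline{X}(|\psi_0\rangle)$; the $Z$-only elements of the stabilizer (the kernel of the $X$-projection, of dimension $n-\operatorname{rank}(\overline{X})$ by rank--nullity and the injectivity of the $Z$-projection on that kernel) impose independent parity constraints bounding the support by $2^{\operatorname{rank}(\overline{X})}$, and the two inclusions pin down $\operatorname{supp}(|\psi_0\rangle)=\bs{u}_{x_0}+V$ exactly. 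Your route is shorter, avoids the bookkeeping of sequential tableau updates, and recovers the classical fact that a stabilizer state's support is an affine subspace; the paper's route is more constructive in that it exhibits the measurement sequence explicitly and stays within the simulation machinery used elsewhere in the paper (e.g.\ to find $\bs{u}_{x_0}$ in $O(n^2)$ operations). The one step you should make explicit if you write this up is the injectivity of the $Z$-projection on the kernel of the $X$-projection --- i.e.\ that no nontrivial product of generators is $\pm I$ --- which is what guarantees the $n-\operatorname{rank}(\overline{X})$ parity constraints are genuinely independent.
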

\begin{proof}
Let $M_{a,s}$, $1\leq a \leq n$ and $s \in \{0,1\}$, be the projector operators in \cref{eq:defM}. From~\cite{Aaronson2004ImprovedSO}, when $M_{a,s}$ is performed with respect to the stabilizer state $|\psi_0\rangle$, there are only two possibilities:
\begin{enumerate}
\item The measurement outcome is "random", namely
\[
\langle\psi_0|M_{a, 0}^\dagger M_{a, 0} |\psi_0\rangle = \langle\psi_0|M_{a, 1}^\dagger M_{a, 1} |\psi_0\rangle = \tfrac{1}{2}.
\]
In this case, denote the measurement outcome to be $s \in \{0, 1\}$. Then the resulting state after measurement is
\[
\frac{M_{a, s} |\psi_0\rangle}{\sqrt{\langle\psi_0|M_{a, s}^\dagger M_{a, s} |\psi_0\rangle}} = \sqrt{2} M_{a, s} |\psi_0\rangle.
\]
\item The measurement outcome $s \in \{0, 1\}$ is determinate and $M_{a, s} |\psi_0\rangle = |\psi_0\rangle$.
\end{enumerate}

Suppose $|\bs{u}\rangle = |\bs{u}_{x_0}\rangle$ is a basis state such that $|\langle \psi_0 |\bs{u} \rangle|^2 \neq 0$ as in the statement of the theorem. We write $|\bs{u}\rangle$ as a binary sequence $|u^{(1)}\dots u^{(n)}\rangle$. Since $|\bs{u}\rangle$ is a measurable outcome from $|\psi_0\rangle$, we know that 
\begin{equation}\label{eq_uxpsi0n}
|\bs{u}\rangle = 2^{r/2} M_{n, u^{(n)}} M_{n-1, u^{(n-1)}} \dots M_{1, u^{(1)}} |\psi_0\rangle, 
\end{equation}
where $r$ is some positive integer that represents, for now, the number of measurements with "random" outcomes. It follows that there \cref{eq_uxpsi0n} can be written with only $r$ qubit positions $\{q_1, q_2, \dots, q_r\} \subset \{1, \dots, n\}$ as 
\begin{equation}\label{eq_rProjectionsX}
|\bs{u}\rangle = 2^{r/2} M_{q_r, u^{(q_r)}} M_{q_{r-1}, u^{(q_{r-1})}} \dots M_{q_1, u^{(q_1)}} |\psi_0\rangle.
\end{equation}
Define $|\psi_t\rangle = 2^{t/2} M_{q_t, u^{(q_t)}} \dots M_{q_1, u^{(q_1)}} |\psi_0\rangle$ for $1\leq t \leq r$. For a fixed $t$, we want to identify the action on the tableau $\tau(|\psi_{t-1}\rangle)$ that mirrors the projection operator $M_{q_t, u^{(q_t)}}$ applied to the state $|\psi_{t-1}\rangle$. 

We know that the measurement of qubit $q_t$ with respect to $|\psi_{t-1}\rangle$ yields a random outcome. According to~\cite{Aaronson2004ImprovedSO}, this is the case if and only if 
there exists $p_t \in \{n+1, \dots, 2n\}$ such that $X(|\psi_{t-1}\rangle)_{p_t}^{(q_t)} = 1$. In the tableau representation, when $M_{q_t, u^{(q_t)}}$ is applied, we can use the following algorithm to obtain $\tau(|\psi_t\rangle)$ from $\tau(|\psi_{t-1}\rangle)$: 
\begin{enumerate}
    \item For every $i \in \{1, \dots, 2n\}$ such that $i \neq p_t$ and $X(|\psi_{t-1}\rangle)_i^{(q_t)} = 1$, add row $i$ with row $p_t$ and replace row $i$ with the new row.
    \item Set the row $\tau(|\psi_{t}\rangle)_{p_t - n} = \tau(|\psi_{t-1}\rangle)_{p_t}$. 
    \item Set the row $\tau(|\psi_{t}\rangle)_{p_t} = 0$ except for the entry $Z(|\psi_{t}\rangle)_{p_t}^{(q_t)} = 1$.
    \item Assign $S(|\psi_{t}\rangle)_{p_t}$ the value $u^{(q_t)}$.
\end{enumerate}

We now make some observations on the rank of $X(|\psi_t\rangle)$. In step 1, we remove $\overline{X}(|\psi_{t-1}\rangle)_{p_t}$ from any other row in $\overline{X}(|\psi_{t-1}\rangle)$ which can be expressed as a linear sum of other row vectors that includes $\overline{X}(|\psi_{t-1}\rangle)_{p_t}$. Therefore, the rank is not altered by step 1. In step 3, the row $\overline{X}(|\psi_{t-1}\rangle)_{p_t}$ is set to zero, and so, complementing this with step 1, we find that \[
\operatorname{rank}(\overline{X}(|\psi_t\rangle)) = \operatorname{rank}(\overline{X}(|\psi_{t-1}\rangle)) - 1.
\]

Measurements done on any qubit for a single basis state must be determinate. This implies that $\overline{X}(|\bs{u}\rangle)$ is identically zero. Since each measurement with a random outcome decreases the rank of $\overline{X}(|\psi_t\rangle)$ by one, then the number of measurements is $r = \operatorname{rank}(\overline{X}(|\psi_0\rangle))$. Moreover, from the measurement procedure, $X(|\psi_0\rangle)_{p_t}$ must be linearly independent of the set $\{X(|\psi_0\rangle)_{p_s} \}_{s=1}^{t-1}$. Therefore, $\{X(|\psi_0\rangle)_{p_t}\}_{t=1}^r$ is a linearly independent set of rows.  Furthermore, since each row $p_t$ is copied to row $p_t - n$ during the $r$ applications of Step~2, then the row space of $\overline{X}(|\psi_0\rangle)$ is
\begin{equation*}
\sspan\{X(|\bs{u}\rangle)_{p_t - n} \}_{t=1}^r  
= \sspan \{X(|\psi_0\rangle)_{p_t}\}_{t=1}^r.
\end{equation*}

We now identify the new basis state measured if one of the $r$ measurement outcomes that produced $|\bs{u}\rangle$ in \cref{eq_rProjectionsX} is negated. This will provide us a means to generate all other measurement possibilities. Without loss of generality, suppose we replace the $r$-th operator $M_{q_r, u^{(q_r)}}$ in \cref{eq_rProjectionsX}  with $M_{q_r, u^{(q_r)} \oplus 1}$ (since the projection operators are commutative, any other index can be chosen and then the corresponding projection rotated to the last position). This would yield a new basis state $|\bs{v}\rangle$ which is also a measurable outcome from $|\psi_0\rangle$. From the measurement procedure with random outcomes, this means that $\tau(|\bs{v}\rangle)$ would equal $\tau(|\bs{u}\rangle)$ except that $S(|\bs{v}\rangle)_{p_r} = S(|\bs{u}\rangle)_{p_r} \oplus 1$, as the only difference was the sign assignment at step 4 on the very last projection operator. 

We can calculate the basis state $|\bs{v}\rangle$ given its tableau representation. We know $\overline{X}(|\bs{v}\rangle)$ is identically zero. If we were to perform a measurement on a qubit $q$, it would be determinate. From the determinate measurement procedure described in~\cite{Aaronson2004ImprovedSO}, the measurement outcome $v^{(q)}$ is equal to the last entry of the summation of all the rows $\tau(|\bs{v}\rangle)_{i + n}$ over $1 \leq i \leq n$ such that $X(|\bs{v}\rangle)_{i}^{(q)} = 1$.  In other words, since $\overline{X}(|\bs{v}\rangle)$ is identically zero, we find that
\[ 
v^{(q)} = \bigoplus_{1 \leq i \leq n: X(|\bs{v}\rangle)_i^{(q)} = 1} S(|\bs{v}\rangle)_{i + n}.
\] 
If we then consider the measurement of the same qubit $q$ for $\tau(|\bs{u}\rangle)$, then we could rewrite the expression as 
\[
v^{(q)} = u^{(q)} \oplus \begin{cases}
    1, & \text{if } X(|\bs{u}\rangle)_{p_r - n}^{(q)} = 1, \\
    0, & \operatorname{otherwise}.
\end{cases}
\]
Namely, $|\bs{v}\rangle = |\bs{u} \oplus X(|\bs{u}\rangle)_{p_r - n}\rangle$. This holds true when $p_r$ is replaced by any $p_t$ in Eq. \eqref{eq_rProjectionsX}. Moreover, for any sequence of indices $\{t_1, \dots, t_m : 1 \leq m \leq r\} \subset \{1, \dots, r\}$, we will have
\[
|\langle \psi_0 | \bs{u} \oplus X(|\bs{u}\rangle)_{p_{t_1} - n} \oplus \dots \oplus X(|\bs{u}\rangle)_{p_{t_m} - n} \rangle |^2 \neq 0.
\]
This indicates that $|\langle \psi_0 | \bs{u} \oplus \bs{v}\rangle|^2 \neq 0$ for any $\bs{v}$ in $\operatorname{span} \{X(|\bs{u}\rangle)_{p_t - n} : 1 \leq t \leq r\}$, which is identical to the row space of $\overline{X}(|\psi_0\rangle)$.   

Finally, let $\bs{R} \in \Z_2^{n\times r}$ be the nonsingular matrix whose columns span the row space of $\overline{X}(|\psi_0\rangle)$. Writing $\bs{v} = \bs{R} \bs{z}$ for $\bs{z} \in \mathbb{Z}_2^r$ and $\bs{t} = \bs{u}$, we get the desired characterization of the set $J_K$ of all basis states with nonzero amplitudes in $|\psi_0\rangle$.   \end{proof} 

We observe that we can use the simulation procedure described in~\cite{Aaronson2004ImprovedSO} to find an initial measurable outcome $|\bs{u}_{x_0}\rangle$ from $|\psi_0\rangle$ in $O(n^2)$ operations. We can then use our construction of $\bs{R}$ and $\bs{t}$ in~\cref{eq_pauliDistributionK} to yield the distribution of $K$ which then allows us to compute the approximation to the frame potential in \cref{eq:tildeFU}. 

\section{An Example Application} \label{sec:example}

Suppose we set $N = n = 5$ and $U$ defined by \cref{eq:defU} with the following set of pairwise commuting Pauli operators,
\begin{equation}\label{eq:exampleHs}
    \begin{array}{lcc}
H_1 &=& -XXYYY \\
H_2 &=& IYIIX \\
H_3 &=& -IZXXZ \\
H_4 &=& XYIZI \\
H_5 &=& -XZXYY
\end{array}.
\end{equation}
Two Pauli operators commute if and only if the number of positions in which the strings differ and neither element is $I$ is even~\cite{sarkar2021sets}, which can be visually confirmed above. After simultaneous diagonalization, we obtain the unitary $W$ shown in circuit representation in \cref{fig:Wcircuit}.
\begin{figure}
    \centering
    \includegraphics[scale=0.8]{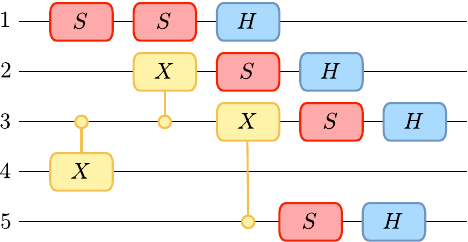}
    \caption{Circuit corresponding to the operators in \cref{eq:exampleHs}.}
    \label{fig:Wcircuit}
\end{figure}
After diagonalization, the diagonal operators in \cref{eq:exampleHs} can be encoded in the matrix
\[
\bs{A} = \left(\begin{tabular}{ccccc}
    1 & 0 & 0 & 1 & 1 \\
    0 & 1 & 1 & 0 & 1 \\ 
    0 & 1 & 1 & 0 & 0 \\ 
    1 & 1 & 0 & 1 & 0 \\ 
    1 & 1 & 0 & 1 & 1 
\end{tabular}\right).
\]

If we are to apply the simulation procedure in~\cite{Aaronson2004ImprovedSO}, the last $n$ rows of the tableau $\tau(|\psi_0\rangle)$ will be given by 
\[
\left(
\begin{tabular}
[c]{ccccc|ccccc|c}%
1 & 0 & 0 & 0 & 0 & 0 & 0 & 0 & 0 & 0 & 0\\
0 & 1 & 1 & 0 & 1 & 0 & 0 & 0 & 0 & 0 & 0\\
0 & 0 & 1 & 0 & 1 & 0 & 0 & 0 & 0 & 0 & 0\\
0 & 0 & 1 & 0 & 1 & 0 & 0 & 0 & 1 & 0 & 0\\
0 & 0 & 0 & 0 & 1 & 0 & 0 & 0 & 0 & 0 & 0\\
\end{tabular}\right). 
\]
from which we get $r=4$ and
\[ \bs{R} = \left(\begin{tabular}{cccc}
     1 & 0 & 0 & 0 \\
     0 & 1 & 0 & 0 \\ 
     0 & 0 & 1 & 0 \\ 
     0 & 0 & 0 & 0 \\ 
     0 & 0 & 0 & 1
\end{tabular}\right).  \]
Also, $|\bs{u}_{x_0} \rangle = |0\rangle^{\otimes n}$ is a measurable outcome from $|\psi_0\rangle$. Therefore, from \cref{prop:pmfK2} and \cref{thm:JK}, the probability mass function of $K$ is
\[ 
\mathbb{P}(K = (-1)^{\bs{b}}) = \frac{1}{16}, \quad \bs{b} \in \operatorname{range}(\bs{AR}).
\]
The resulting covariance matrix $\Cov(K) = \frac{1}{16} \bs{K}^{\intercal}\bs{K}$ is the identity matrix. The volume of the lattice can be computed as in  Prop. 2 of \cite{ramirez2024expressiveness}, and gives $V_U = 64$. As per \cref{eq:tildeFU}, the approximate frame potential for this circuit is
$$\tilde{\mathcal{F}}_U(t) = 2(\pi t)^{-5/2}.$$

Consider, in contrast, a circuit with $H_1=YZZIX$, $H_2=YYXII$, $H_3=-ZIYIX$, $H_4=ZXXXY$, $H_5=ZIYZI$, which corresponds to a set of Pauli strings in a different subgroup of $\pauli_n$ than the previous example. A similar calculation yields a matrix $\bs{R}$ of full rank and the  volume equal to the minimum possible value $V_U = 32$. Thus, frame potential is half that of the previous example and the circuit twice as expressive.

\section{Extending to General Sets of Pauli Operators}\label{sec:noncommutative}

The quantum expectation $f_U (\btheta)$ in \cref{eq:deffU} is a characteristic function of a random variable. However, it was never analyzed in~\cite{ramirez2024expressiveness} if $F_U (\btheta, \btheta')$ can also be a candidate for such a construction. To consider if this is possible, we still assume that $U(\btheta)$ is as in \cref{eq:defU} but we relax the assumptions we made before such that $\{H_j\}_{j = 1}^N$ can be any collection of Pauli operators that need not be commutative.  

We know that $F_U (\bs{0}, \bs{0}) = 1$ since the parameterized unitaries simply become the identity. Further, by its definition, $F_U (\btheta, \btheta')$ is nonnegative. Therefore, if we consider any finite sequence $\{(\btheta(k), \btheta'(k)\}_{k=1}^N$ of inputs, then we recognize that the matrix $\bs{F}$ defined by
\[
F_a^{(b)} = F_U (\btheta(a) - \btheta(b), \btheta'(a) - \btheta'(b))
\]
is a totally positive matrix. This means that if $F_U$ is an even function, that is, $F_U(\btheta, \btheta') = F_U(-\btheta, -\btheta')$, then $F_U$ is a positive definite function. This meets all the conditions of Bochner's criterion which states that $F_U$ will then be the characteristic function of some random variable. 

When calculating the frame potential $\mathcal{F}_U (t)$, let $F_U (\btheta, \btheta') = E_U (\btheta, \btheta') + O_U(\btheta, \btheta')$ where 
\[
E_U (\btheta, \btheta') = \frac{F_U (\btheta, \btheta') + F_U (-\btheta, -\btheta')}{2},
\]
\[
O_U (\btheta, \btheta') = \frac{F_U (\btheta, \btheta') - F_U (-\btheta, -\btheta')}{2}.
\]
It is easy to see that $E_U$ and $O_U$ are even and odd functions respectively. Then, for positive integer $t$, we have 

$$\mathcal{F}_U (t)= \frac{1}{(2 \pi)^{2 N}} \int_{[-\pi, \pi]^{2 N}} \sum_{k = 0}^{\lfloor t / 2 \rfloor} \binom{t}{2 k} O_U^{2 k} E_U^{t - 2k} \ud \btheta \ud \btheta'.$$

Owing to the antisymmetry of odd functions about the origin, the integrals of the odd component of $\mathcal{F}_U (t)$ disappears. The remaining integrand is now an even function for which a characteristic function exists. We observe $O_U^2 + 1$ and $E_U$ are themselves characteristic functions by Bochner's criterion. In order to fully characterize $F_U(\btheta, \btheta')$, we would need to identify its Fourier coefficients. These coefficients can be calculated using multiplicative sequences of Pauli operator projections. We have demonstrated that the simulation of stabilizer circuits gave us a concise description of the characteristic function of $f_U (\btheta, \btheta')$. Using a generalized version of this simulation procedure which encodes phases, we can then obtain representations of the Fourier series of $F_U (\bs{\theta}, \bs{\theta}')$ as well. This has been left as future work for us as well as any interested reader.

\section{Conclusion}\label{sec:conclusion}

In this paper, we have given a method which produces a nice closed form of the Fourier coefficients for parametric quantum circuits composed of commuting Pauli rotations. We extended the initial strategy developed in~\cite{ramirez2024expressiveness}, and showed how we can use an algorithmic approach, as given in~\cite{vandenBerg2020circuitoptimization} and~\cite{Aaronson2004ImprovedSO}, to produce such a formula. Further, our investigations show that variants of our approach can potentially be successful in evaluating the frame potential of circuits where a \textit{noncommuting} set of Pauli operators is allowed.  

Another intriguing avenue of study that has emerged through this line of work: given a parametric quantum circuit described by the unitary $U(\btheta) = \prod_{j = 1}^N \exp(i \theta_j H_j)$, what would be its Fourier expansion? We had shown already that stabilizers can enable such an answer, and that such a procedure could potentially be generalized to any set of Pauli operators. Using Suzuki-Trotter decompositions of higher order, there could be a possibility where we can accurately approximate the Fourier coefficients for any possible set of Hamiltonians $H_j$. 

This exploration of the Fourier series is also of interest as such a representation can be used to explore periodic behaviors of PQC architectures, which shows up in quantum machine learning applications. Further, it can have practical uses in the computation of certain quantities like the frame potential. Understanding the Fourier representation of the action of $U(\btheta)$ would give us the means to attack problems using tools that are traditionally not available to us, as demonstrated with the characteristic function of random variables. We anticipate that the ability to effectively compute such quantities by leveraging their underlying structural properties may play a future role in algorithmic and software design.

\section{Acknowledgements}
This work was completed under the aegis of the DOE Science Undergraduate Laboratory Internship (SULI) program and the Graduate Research at ORNL (GRO) program. Support for this work partially came from the U.S. Department of Energy's Advanced Scientific Computing Research (ASCR) Accelerated Research in Quantum Computing (ARQC) Program under the FAR-QC and AIDE-QC (field work proposal ERKJ332). We would also like to thank Ryan Bennink for providing us resources on how phase sensitive Clifford circuit simulation can be leveraged to find the Fourier coefficients of PQCs in the noncommutative scenario.

\bibliography{main}
\bibliographystyle{unsrt}

\end{document}